\documentclass[submission,copyright,creativecommons]{eptcs}
\providecommand{\event}{FROM 2026} 

\usepackage{iftex}

\ifpdf
  \usepackage{underscore}         
  \usepackage[T1]{fontenc}        
\else
  \usepackage{breakurl}           
\fi

\usepackage{pdfx}
\usepackage{listings}
\usepackage{graphicx}
\usepackage{euscript}
\usepackage{tabularx}
\usepackage{xurl}

\usepackage{subcaption}
\usepackage{multirow}
\usepackage{amsthm}
\usepackage{amsfonts}
\usepackage{amssymb}
\usepackage{amsmath}
\usepackage{booktabs}
\usepackage{blkarray, bigstrut}

\newcommand{\cdom}{{\ensuremath{\EuScript{D}}}}

\newcommand{\Inp}[1]{I(#1)}
\newcommand{\Out}[1]{O(#1)}
\newcommand{\Inh}[1]{H(#1)}

\newcommand{\ColClasses}{\ensuremath{\Sigma}}
\newcommand{\Guards}{\ensuremath{\Phi}}

\newcommand{\mk}[1]{\ensuremath{\mathbf{#1}}}

\newcommand{\vect}[1]{\ensuremath{\mathbf{#1}}}

\newcommand{\Bag}[1]{\ensuremath{Bag[{#1}]}}
\newcommand{\Gbag}[1]{\ensuremath{Bag^*[{#1}]}}
\newcommand{\fset}[2]{\ensuremath{\{#1 \rightarrow #2\}}}

\newcommand{\p}[1]{\ensuremath{\mathrm{#1}}}

\newcommand{\SNex}{\textsf{SNexpression}}
\newcommand{\GreatSPN}{\textsf{GreatSPN}}

\newcommand{\tuple}[1]{\langle{#1}\rangle}
\newcommand{\lang}{\EuScript{L}}
\newcommand{\Nat}{\mathbb{N}} 
\newcommand{\Int}{\mathbb{Z}}

\newtheorem{definition}{Definition}[section]

\newtheorem{property}{Property}[section]

\newtheorem{proposition}{Proposition}[section]

\newtheorem{claim}{Claim}[section]

\begin{document}

\title{Semi-automated Verification of Symbolic  Invariants In Extended Symmetric Nets}



\author{Lorenzo Capra
\institute{
{Department of Informatics}\\
    {Universit{\`a} degli Studi di Milano}, Italy}}
    
\def\titlerunning{Symbolic Verification of Invariants in ESN}
\def\authorrunning{L. Capra}



\maketitle

\begin{abstract}
Structural analysis is central to Petri Net (PN) research, complementing state‑space methods while avoiding their combinatorial issues. It is well studied for classical PNs but much less for High‑Level Petri Nets (HLPN). Symmetric Nets (SN), a common HLPN formalism, use compact annotations to encode behavioral symmetries, allowing symbolic reachability graphs and associated lumped Markov chains for stochastic SN.
In the past two decades, specific structural techniques for SN have emerged, notably the \SNex\ tool, which implements a calculus for symbolic structural relations such as conflict and causality. We propose using this calculus to semi-‐automatically verify symbolic structural invariants, currently possible only for restricted SN subclasses, for an \emph{extended} SN formalism (ESN) closed under key functional operators. We focus on flows and outline, at least in theory, how to construct a flow‑generating family. We also sketch a framework for formally verifying a wider range of invariant properties. Representative examples illustrate the main concepts.

\end{abstract}



\section{Introduction And Related Work}
\label{intro}
High-Level Petri Nets (HLPNs) \cite{HLPN} extend basic Petri nets (PNs) \cite{ReisigPN} by allowing structured tokens and typed places and transitions, so each node may have multiple instantiations. Colored Petri Nets (CPNs) \cite{Jensen1991,CPN09} are HLPNs where nodes are associated with finite color sets and arcs map transition colors to multisets over place colors. Any CPN can be unfolded into an equivalent Place/Transition (PT) net \cite{Jensen1991,ReisigPN}, but for realistic models, these unfoldings are often prohibitively large or infeasible to construct, and they hinder relating analysis results back to the CPN.

Symmetric Nets (SN) \cite{CDFH93}, formerly Well-formed Nets, form a CPN subclass whose structured syntax captures behavioral symmetries. Node domains are Cartesian products of basic classes, partitionable into subclasses gathering behaviorally similar entities, and arcs are labeled with tuples of base color functions in these domains. Despite these restrictions, SNs are as expressive as standard CPNs \cite{Jensen1991}. Using symbolic initial markings and firing rules, one can construct a compact Symbolic Reachability Graph strongly bisimilar to the ordinary one, which for Stochastic SNs yields a lumped Markov chain \cite{GreatSPN}.

A key advantage of PNs is that meaningful properties can be derived directly from their structure. Extending this to CPNs symbolically (i.e. without unfolding) is a challenge. The structural analysis of SN has advanced considerably in the past two decades. Its theoretical foundations, established in \cite{CAPRA2005}, define a language to symbolically express and compute key structural relations (conflict, causal connection, mutual exclusion) \cite{QEST2015}. This language resembles SN arc functions but is a little more expressive. 
The {\SNex} software tool \cite{Capra2020} (\url{http://www.di.unito.it/~depierro/SNexpression}), a CLI built on an extensible Java library, is a computer algebra system that reduces user-specified structural expressions over SN node pairs to normal forms. A recent version \cite{FORTE25} extends this calculus to the matrix level (the whole SN), enabling the computation of more general structural dependencies, including priorities.

This paper shows that the symbolic calculus underlying \SNex\ can be exploited to verify structural \emph{invariants} in an \emph{extension} of SN (ESN) with favorable algebraic properties, focusing on (semi)flows. The ESN arc functions form a closed class under key functional operators, in particular composition \cite{pnse2021}. To our knowledge, no other method computes and verifies symbolic invariants without constraining SN color functions.
This complements the computation of structural dependencies already available.

First, we consider the verification of symbolic (semi)flows, currently available only for restricted subclasses of SN models. Second, we introduce colored flows, which ignore token multiplicities and focus on color types, without relying on conservative assumptions. Third, we sketch how to derive a generative basis of semiflows by exploiting the constant-size property of ESN functions and relating symbolic semiflows to conventional semiflows of the underlying P/T-net skeleton. Fourth, we outline a more general approach based on standard inductive inference techniques, whose automation requires extending the supporting tool. 

All concepts are illustrated through representative examples. In this study, we automated the majority of the elementary steps required by the first three methods; in principle, these methods are fully amenable to automation in \SNex. By contrast, the fourth technique would require substantial modifications to enable the dynamic introduction of new rules into the underlying rewriting engine.

\vspace{-7pt}
\paragraph{Related work}
Generative flow families for classical Petri nets (PNs)—integer solutions of a homogeneous linear system with the incidence matrix as the coefficient matrix—have been thoroughly studied and successfully solved since \cite{colom1989}. Extending these techniques to High-Level Petri Nets (HLPN), especially for automated verification and symbolic flow generation, has also been widely investigated, but substantial results exist only for restricted HLPN subclasses, mostly using symbolic variants of classical elimination procedures (e.g., Gaussian or Gauss–Jordan elimination).
In \cite{Couvreur1991}, an algorithm is given for generative flow families in Commutative High Level Nets, a CPN subclass whose color functions form a ring of commutative endomorphisms. This work was later extended to Unary Regular Nets and unary Predicate/Transition nets in \cite{CouvHad93}. The method in \cite{Silva1991}, based on generalized inverses of arc functions, is in principle applicable to general CPNs but has never been automated because symbolic computation of these inverses is practically infeasible. Since then, only a few major theoretical advances have appeared; among them, \cite{Eva2007} proposes a block-matrix–based algorithm for a subclass of Symmetric Nets (SN), Simple Well-formed Nets. However, strict constraints on color functions, e.g. the absence of guards, make this technique unusable in many settings, including most examples in this paper.

The paper is organized as follows. Section \ref{sec:SNdef} introduces the ESN syntax. Section \ref{sec:eq-tr} establishes the fundamental properties of the language of color functions. Section \ref{sec:semiflow-def} presents symbolic flows and their basic properties, while Section \ref{sec:semiflow-ver} demonstrates flow verification through examples that highlight key features of the ESN syntax. Section \ref{sec:basis} briefly discusses the calculation of a flow basis for ESN. Section \ref{sec:cflow} defines a class of colored flows that ignore color multiplicities. Section \ref{sec:genInv} treats the verification of more general structural invariants for ESN. We conclude by indicating directions for future work.

\begin{figure}[!h]
    \centering
    \includegraphics[width=0.8\linewidth]{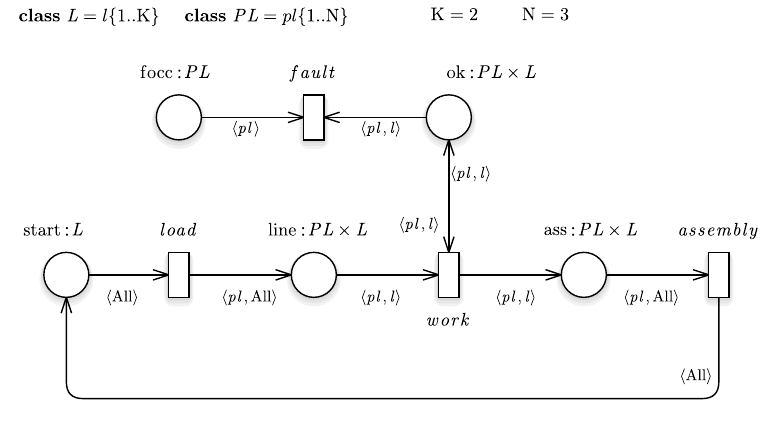}
    \caption{SN model of a distributed Production Line}
    \label{fig:PL}
\end{figure}

\section{(Extended) Symmetric Nets}
\label{sec:SNdef}

Assuming only basic HLPN knowledge, we first give an informal overview of SN syntax using an example that will reappear later. We then formally present an extended version of SN, called ESN, focusing on the components needed for the invariant calculus. Stochastic parameters are omitted, as they are irrelevant here.

The domains of (E)SN nodes are Cartesian products of finite \emph{color classes}. These classes may be partitioned into static subclasses or, alternatively, ordered circularly. Arc inscriptions are sums of tuples of basic color functions, such as projections and constants denoting subclasses or entire classes.

Consider the SN shown in Fig. \ref{fig:PL}; it refers to the core component of a distributed production system that gracefully degrades and is used as a reference in \cite{CAPRA-TCS2024,ICLP24}.
The system has $N$ parallel replicas of a production line (PL), each with $K$ mutually interchangeable components subject to faults and controlled degradation (we disregard the latter aspect here).
The PT net representing the unfolding of this SN (for $K = 2, N = 2$) is given in the appendix (Fig. \ref{fig:unfolding}).

The SN model therefore employs two unpartitioned and unordered color classes, $L$ and $PL$, of sizes $K$ and $N$, respectively (these are the model parameters).  
The places `\p{line}', `\p{ass}', and `\p{ok}' have domain $PL \times L$; the transition $work$ has the same inferred domain and its incident arcs use the tuple $\tuple{pl,l}$. Thus, an instance $\tuple{pl=c,l=c'}$ fires exactly when the required tokens $\tuple{c,c'}$ are present, moving one of such tokens from `\p{line}' to `\p{ass}'.
The transition $load$ has domain $PL$. Its input arc from `\p{start}' is labeled $\tuple{All}$, while its output arc to `\p{line}' is labeled $\tuple{pl,All}$. Hence, one firing consumes all colors of $L$ from `\p{start}' and produces the multiset $\sum_{c'\in L}\tuple{c,c'}$ in `\p{line}'. The variable $pl$ is free in the input condition, so all instances enabled by $load$ are mutually in conflict. Ordered/partitioned classes, guards, algebraic sums, and the ESN-only construct are used in later examples.

\subsection{Multiset functions and operations}
Let $C$ be a non-empty set. A multiset in $C$ is a map $\vect{b}:C\rightarrow\Int$, 
where $\vect{b}(c)$ is the multiplicity of $c$;
its ``size'' is $|\vect{b}| := \sum_{c \in C} \vect{b}(c)$ (can be negative). $\Gbag{C}$ denotes (the set of) multisets and $\Bag{C}\subset\Gbag{C}$ those with natural multiplicities. If $\vect{b}\in\Bag{C}$ then its support is $\overline{\vect{b}}:=\{c\in C\mid \vect{b}(c) \neq 0\}$.
The empty multiset is denoted by $\vect{0}$.

Multiset operations are defined pointwise. Let $\vect{a}, \vect{b}\in\Gbag{C}$. For $\lambda\in\Int$, $\lambda\vect{a}(c)=\lambda *\vect{a}(c)$; binary operations $\mu$ 
move pointwise to $\vect{a}\,\mu\,\vect{b}$. 
For example $\vect{a} +\vect{b} (c) = \vect{a}(c) + \vect{b} (c) \ \forall c \in C  $; $\vect{a} -\vect{b} \equiv \vect{a} + (-1)\vect{b}$. The operators $\cap$ ($\equiv \ min$) and $\ominus$ are used in natural multisets, where $\vect{a}\ominus \vect{b}:=min(\vect{a}-\vect{b},\vect{0})$. The Cartesian product is also lifted: if $\vect{a}'\in\Gbag{C'}$, then $\tuple{\vect{a},\vect{a}'}\in\Gbag{C\times C'}$ and $\tuple{\vect{a},\vect{a}'}(c,c') := \vect{a}(c) * \vect{a}'(c')$. The relational operators are extended in an analogous way; for instance, $\vect{a} < \vect{b} := \bigwedge_{c \in C} \vect{a}(c) < \vect{b}(c)$.
Associative operations ($+(-), \cap, \tuple{\ldots}$) can be straightforwardly generalized to their n-ary counterparts.

Functional operators are obtained by evaluation. If $f,h\in\fset{A}{\Gbag{C}}$, $f'\in\fset{A}{\Gbag{C'}}$, and $\lambda\in\Int$, then $\lambda f$, $f+h$, $\tuple{f,f'}$, comparisons, $\cap$, $\ominus$, and support are defined pointwise.
For example, $f + h \in\fset{A}{\Gbag{C}} := f(a) + h(a)$, $\forall a \in A$. The product $\tuple{f,f'} \in \fset{A}{\Gbag{C\times C'}}$ (called a function tuple) is $\tuple{f,f'}(a) := \tuple{f(a),f'(a)}$, $\forall a$. Transposition is central: for $f\in\fset{A}{\Gbag{C}}$, $f^t\in\fset{C}{\Gbag{A}}$ is $f^t(c)(a) := f(a)(c)$, $\forall a, c$.

Functions are linearly extended to multiset arguments. If $f\in\fset{A}{\Gbag{B}}$, then, abusing the notation $f(\vect{a}):=\sum_{a\in A}\vect{a}(a)f(a)$; consequently, if $g \in\fset{B}{\Gbag{C}}$ then $g\circ f$ is defined by applying the linear extension of $g$ to $f(a)$. Guards between brackets are interpreted as functions $[p]\in\fset{A}{\Gbag{A}}$ that return $a$ ($1a$) when $p(a)$ is satisfied and $\vect{0}$ otherwise. For $f\in\fset{A}{\Gbag{B}}$, suffix and prefix guards are $f[p]:=f\circ[p]$ and $[p']f:=[p']\circ f$; prefix guards play a crucial role because they act as filters.

\subsection{Extended SN: formal definition}
\label{subsec:ESN}
\newcommand{\pri}{\ensuremath \pi}
\begin{definition}[Extended SN]
An ESN is a tuple
$$ {\EuScript N}=(P,T,\ColClasses,\cdom,I,O,H,\Guards,\mk{m}_0)$$
where $P$ and $T$ are finite, nonempty, disjoint sets of places and transitions; $\ColClasses$ is a nonempty set of color classes; $\cdom$ assigns a color domain to each node; $I,O,H$ are input, output and inhibitor arc-function families; \Guards\ assigns a guard to each transition; and $\mk{m}_0$ is the initial marking.
\end{definition}

\vspace{5pt}
Each color class $C_i$ is finite and disjoint from the others; it may be partitioned into static subclasses or ordered circularly, in which case $!^m(c)$ denotes the $m$-th successor/predecessor modulo $|C_i|$. A color domain is a product $\bigotimes_i C_i^{e_i}$, and $\cdom_\ColClasses\leq\cdom'_\ColClasses$ iff each multiplicity $e_i$ is not larger than the corresponding $e_i'$. Place domains define token shapes; transition domains define transition instances, and are usually inferred from incident inscriptions. A guard restricts $\cdom(t)$ to instances that meet a standard predicate. For simplicity, we denote this restriction by $\cdom(t)$.

For $W\in\{I,O,H\}$, an arc function $W(p,t)\in\fset{\cdom(t)}{Bag[\cdom(p)]}$ belongs to the language:
\begin{equation}
\label{eq:arcfun}
 \lang_{\cdom_{\ColClasses},\cdom'_{\ColClasses}} :=
 \{ F := \sum_i \lambda_i . [g_i'] T_i [g_i],\,\lambda_i \in \Int \}
 \cup \{0_{\cdom_{\ColClasses},\cdom'_{\ColClasses}}\}.
\end{equation}
Here, $T_i=\langle f_1,\ldots,f_{|\cdom'_{\ColClasses}|}\rangle$ is a function tuple and $g_i,g'_i$ are standard predicates on the source and target domains. Each class function has the form
\begin{equation}
\label{eq:classfun}
f \in \fset{\cdom_{\ColClasses}}{\Gbag{C_h}} := \sum_{k} \alpha_k. \epsilon_k, \,\, \alpha_k \in \Int,
\end{equation}
where $\epsilon_k$ is a projection $x_h^j$ (possibly shifted by $!^m$ on ordered classes), a constant $All_h$, or a subclass constant $All_{h,q}$. Standard predicates use equality/disequality between projections, ordered variants, and subclass membership.

Negative terms may appear syntactically, following the semantics of algebraic sums, but arc functions are required to produce multisets in $\Bag{\cdom(p)}$.

\paragraph{Semantics}
A transition instance $(t,c)$ is enabled in the marking $\mk{m}$ iff \footnote{The gray part corresponds to inhibitor-arc functions; these are not included in some CPN definitions and do not affect the flow computation. They will, however, be considered in the approach outlined in Section \ref{sec:genInv}.}
$$
\forall p\in P \  \Inp{p,t}(c) \leq \mk{m}(p) \textcolor{gray}{\ \wedge \
(\Inh{p,t}(c)=0 \vee \Inh{p,t}(c)>\mk{m}(p))}.
$$
If enabled, it leads to $\mk{m}'=\mk{m}+\Out{p,t}(c)-\Inp{p,t}(c)$, written $\mk{m}[(t,c)>\mk{m}'$.

\paragraph{Conventions}
Capital words, e.g., $C,L,PL$, denote color classes, while lowercase variants denote projections; subscripts distinguish repeated classes in a domain, a capability that is not admitted in related approaches.

\paragraph{ESN arc-function example}
The following arc function
$$\tuple{All - n, n, ack} + [n_1 \neq n_2]\, \tuple{All - n,All -  n, ack} $$
is associated with the input arc that connects the place 'delivered' to the transition $complete$ of the ESN shown in Fig. \ref{fig:broadcast}. It has type $\fset{N}{N^2\times L}$ and cannot be represented in the original SN syntax. With $L_1=\{data\}$ and $L_2=\{ack\}$, it collects all the expected acknowledgments for a sender $n$, including the filtered pairs of distinct nodes in $(All-n)\times(All-n)$. The calculation of its image size is nontrivial and uses chromatic-polynomial techniques \cite{pnse2021}.

\section{Base Properties Of Arc Functions}
\label{sec:eq-tr}
The language $\lang$ is designed to be stable under the operations required by symbolic structural analysis.

\vspace{5pt}
\begin{proposition}
$\lang$ is closed under composition $(\circ)$, transposition $({}^t)$, algebraic sum $(+(-))$, difference $(\ominus)$, and intersection $(\cap)$.
\end{proposition}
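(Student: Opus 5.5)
We prove the closure properties by reducing every operation to \emph{elementary} terms: guarded tuples $[g']\tuple{\epsilon_1,\ldots,\epsilon_n}[g]$ in which each component is a single basic function (projection, possibly shifted, or subclass constant $All_{h,q}$; $All_h$ is the sum of its subclass constants). First we show that any $F\in\lang$ can be rewritten, using only identities valid in $\lang$, as an integer linear combination of elementary terms with \emph{disjoint} guards. Three steps achieve this. (i) We distribute the tuple product over sums, using $\tuple{f+f',h}=\tuple{f,h}+\tuple{f',h}$ and its scalar analogue. (ii) We refine each suffix guard $g$ into a disjoint union of \emph{complete} standard predicates, i.e.\ predicates fixing, for every pair of projections of the same class, equality or inequality (and successor offsets for ordered classes), together with the subclass membership of each projection. (iii) We refine each prefix guard $g'$ in the same way on the target domain. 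Under a complete suffix guard, a constant of the form $All_{h,q}-x_h^j$ becomes a fixed \emph{set} of elements that is well defined symbolically. A complete prefix guard applied to an elementary tuple again gives an expression in $\lang$.

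\textbf{Algebraic sum.} Closure under $+(-)$ is immediate from the definition of $\lang$ as the set of integer combinations of guarded tuples.

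\textbf{Transposition.} By linearity $(\sum\lambda_i F_i)^t=\sum\lambda_i F_i^t$ and $([g']T[g])^t=[g]T^t[g']$, so it suffices to transpose an elementary tuple $T$. For $T=\tuple{\epsilon_1,\ldots,\epsilon_n}$ from $\cdom$ to $\cdom'$ we compute $T^t(c')(c)=\prod_k \epsilon_k(c)(c'_k)$. The factors are handled as follows. A projection component $\epsilon_k=x_h^j$ forces $c_h^j=c'_k$; we express this by writing the source variable $x_h^j$ as the target projection $x'_k$, and when two components project the same source variable we add the equality $x'_k=x'_{k'}$ as a prefix guard on the target side. A constant component $All_{h,q}$ imposes only the membership of $c'_k$ in subclass $q$, which becomes a prefix guard. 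Source variables that do not appear in $T$ are unconstrained, so their image under $T^t$ is $All_h$ (or $All_{h,q}$ under the refined guard). Shifts $!^m$ invert to $!^{-m}$. The result is a guarded tuple in $\lang$.

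\textbf{Composition.} By bilinearity $(\sum\lambda_iF_i)\circ(\sum\mu_jG_j)=\sum\lambda_i\mu_j F_i\circ G_j$, so we reduce to $[g_1']T_1[g_1]\circ[g_2']T_2[g_2]$. The middle guard $[g_1]\circ[g_2']$ is a conjunction of predicates on the intermediate domain, hence it is again a standard predicate $[g]$. We then push $[g]$ through $T_2$ to obtain a suffix guard on the source domain. This works by substitution when components are projections. When a component is a constant $All_{h,q}$, the guard, refined to a complete predicate, restricts it to an explicit subset of the form $All_{h,q}-\sum_j x_h^j$ or a single projection. Finally we substitute the components of $T_2$ into the projections of $T_1$ and expand. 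The one nontrivial case is a projection of $T_1$ landing on a constant component of $T_2$, which turns into $All_{h,q}$ times the multiplicity contributed by the other components. Here the constant-size property is needed: the number of elements in the image is a guard-dependent integer, which we compute per complete guard, possibly via chromatic-polynomial counting, and it appears as a coefficient.

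\textbf{Intersection and difference.} We bring both operands to the disjoint-complete normal form over a \emph{common} refinement of their guards. Within each complete guard region, both functions are sums of elementary tuples whose components are, under that guard, disjoint basic sets. Because elementary tuples over disjoint atoms are then either equal or disjoint, we can regroup each operand as $\sum_k a_k\,\tau_k$ over pairwise disjoint atomic tuples $\tau_k$. Then $F\cap G=\sum_k\min(a_k,b_k)\tau_k$, and $F\ominus G$ is obtained pointwise from the coefficients in the same way. This step is the main obstacle. Producing pairwise-disjoint atomic tuples requires the completeness of guards so that components such as $All-x$ and $x'$ become comparable, and the resulting coefficients must be constants rather than color-dependent. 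That both requirements hold follows from the fact that symbolic atoms under a complete predicate denote uniform sets. I would prove this lemma first and take it as the backbone of the whole argument.
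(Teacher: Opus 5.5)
\textbf{Gap in the composition step.} Your treatment of algebraic sum and transposition is sound, and the atomic-decomposition plan for $\cap$ and $\ominus$ is workable; the paper itself states this proposition without proof and defers to its cited calculus. The gap is in composition: your procedure never produces a prefix guard on the codomain, but composition generally needs one. Two cases fail.

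First, pushing the middle guard through $T_2$ to a suffix guard is impossible when the guard relates two \emph{constant} components. For example, $\tuple{y_1,y_2}[y_1\neq y_2]\circ\tuple{All,All}=[z_1\neq z_2]\tuple{All,All}$. No suffix guard captures this, and neither does a restriction of the form ``$All_{h,q}-\sum_j x_h^j$ or a single projection''.

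Second, even without a middle guard, componentwise substitution is wrong when $T_1$ projects the same constant component of $T_2$ twice. For example, $\tuple{y,y}\circ\tuple{All}=[z_1=z_2]\tuple{All,All}$, not $\tuple{All,All}$. The paper's Example~2 hits exactly this case: $\tuple{n,n,l}\circ\tuple{All-n,ack}=[n_1=n_2]\tuple{All-n,All-n,ack}$, which has size $|N|-1$. Your substitution rule would give $\tuple{All-n,All-n,ack}$, of size $(|N|-1)^2$.

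Such codomain filters are precisely the ESN-only construct the paper stresses, and closure under composition depends on them. An argument that cannot generate them therefore cannot establish the claim.

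\textbf{How to repair it.} Refine the middle guard to a complete predicate on the intermediate domain. Treat each constant component of $T_2$ as a fresh bound variable over its subclass, minus the values the guard excludes. When you substitute into $T_1$, two things become a prefix guard on the target domain:
\begin{itemize}
\item repeated occurrences of the same fresh variable;
\item the equalities, inequalities and successor relations that the complete guard fixes among fresh variables.
\end{itemize}
Fresh variables not projected by $T_1$ are summed out. They contribute an integer factor, a falling-factorial or chromatic-polynomial value in the subclass sizes, which is uniform on each complete guard region.

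That uniformity is what your counting step actually needs. Invoking the constant-size property of Definition~\ref{def:c-size} is misplaced: the proposition does not assume it, and closure does not require it. With this repair, each composed term is a sum of terms $[g']T[g]$ lying in $\lang$.
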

The same holds for the derived language $\overline{\lang}:=\{\overline{F}\mid F\in\lang\}$ of set-valued functions, where multiplicities are ignored. Structural dependencies among SN nodes can be expressed in this language with natural coefficients and at most one difference symbol.

For invariant checking, the arc functions are assumed to have a constant size.
\begin{definition}
\label{def:c-size}
$F \in \fset{A}{\Gbag{B}}$ is constant-size if $\exists k \in \Int \, \forall a, |F(a)| = k$.
\end{definition}

\begin{proposition}
An SN transition $t$ can be split into $t'_1,\ldots,t'_k$ with uniquely constant-size arc functions such that $\cdom(t)=\bigcup_i\cdom(t'_i)$, the domains $\cdom(t'_i)$ are pairwise disjoint and $\mk{m}[(t,c)>\mk{m}'$ iff $\mk{m}[(t'_i,c)>\mk{m}'$ for some $i$.
\end{proposition}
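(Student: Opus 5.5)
The plan is to partition $\cdom(t)$ by the satisfaction patterns of all predicates that can make the size of an arc-function image vary, and then to show that each arc function is constant-size on every block. The split transitions $t'_i$ are copies of $t$ that keep all incidences and inscriptions and differ only in their guards. Writing $\phi$ for the original guard $\Guards(t)$, the guard of $t'_i$ is $\phi\wedge\psi_i$, where $\{\psi_i\}$ is a family of mutually exclusive and exhaustive standard predicates on $\cdom(t)$. This gives the domain properties immediately. The blocks $\cdom(t'_i)$ are pairwise disjoint and their union is $\cdom(t)$. Since $t'_i$ carries exactly the arc functions of $t$ restricted to its block, the enabling condition and the marking update of $(t'_i,c)$ coincide with those of $(t,c)$ whenever $c\in\cdom(t'_i)$. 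Hence $\mk{m}[(t,c)>\mk{m}'$ iff $\mk{m}[(t'_i,c)>\mk{m}'$ for the unique $i$ with $c$ in the $i$-th block.

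The real content lies in choosing the $\psi_i$. Consider an arc function $F=\sum_j\lambda_j[g'_j]T_j[g_j]$. By linearity, $|F(c)|=\sum_j\lambda_j\,[g_j(c)]\cdot|g'_j\text{-filtered }T_j(c)|$. For a tuple $T_j=\langle f_1,\dots,f_n\rangle$ without a prefix guard, $|T_j(c)|=\prod_k|f_k(c)|$. Each class function $f_k=\sum\alpha_\ell\epsilon_\ell$ has size $\sum\alpha_\ell|\epsilon_\ell(c)|$, where a projection (possibly shifted) contributes $1$ and $All_h$ or $All_{h,q}$ contributes the constant class or subclass cardinality. So plain tuples are already constant-size. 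Variation comes only from the suffix guards $g_j$ and from the prefix guards $g'_j$, which filter the image. The suffix guards are handled by taking the $\psi_i$ to range over the atoms of the Boolean algebra generated by all the $g_j$ appearing in arcs incident to $t$.

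The prefix guards are where the main obstacle lies. Whether a filtered tuple element survives depends on equalities, disequalities and subclass memberships among the projected components of $c$, together with the constants. I would refine the partition further by all \emph{type} conditions on $c$: for each colour class, the equality pattern among the projections of $c$ (and among their $!^m$ shifts for ordered classes) and the static subclass of each projection. This is a finite set of standard predicates, so its atoms are expressible as guards. Within one such atom, every pair of instances $c,c'$ is related by a permutation of each class that preserves subclasses and, for ordered classes, the circular order (a rotation). Since all functions and predicates in $\lang$ are equivariant under these symmetries, $F(c')$ is the image of $F(c)$ under a bijection, so $|F(c)|=|F(c')|$.

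I expect the most delicate part to be the ordered-class case, since only rotations are available there and equalities like $x=!^m y$ must be included in the type. The argument rests on the equivariance of $\lang$, and I would justify it by structural induction on the grammar (\ref{eq:arcfun})--(\ref{eq:classfun}). I would note that the split is generally not minimal; the paper's \SNex-based computation of image sizes \cite{pnse2021} can then merge blocks with identical sizes.
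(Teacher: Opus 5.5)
Your proof is correct. The construction is the one the paper has in mind: copies of $t$ that differ only by mutually exclusive, exhaustive guards. The paper, however, gives no proof of this proposition. The appendix covers only the two Properties, and the text merely points to Fig.~\ref{fig:tequiv}, where a transition with size-dependent output is split into constant-size copies.

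For plain SN inscriptions $\sum_j\lambda_j T_j[g_j]$, your first step (atoms generated by the suffix guards) already suffices, because tuples of class functions are constant-size. What you add is an existence argument for ESN prefix filters, based on orbit types and the equivariance of $\lang$ under admissible permutations. The paper instead implicitly relies on the symbolic image-size computation of \cite{pnse2021}.

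One point must be made explicit. For an ordered class, the type has to fix the exact offset between every pair of projections, i.e.\ contain $x=!^m y$ for \emph{every} $m$ modulo $|C|$. If you use only the shifts occurring in the inscriptions, the atom $x\neq y\wedge y\neq !x$ of $C\times C$ contains pairs at offsets $2,\dots,|C|-1$. No rotation relates these pairs, so your claim that all instances in an atom are related by a rotation would fail. With full offsets your argument is sound, but the number of blocks grows with $|C|$. You therefore get a valid split for each fixed choice of class cardinalities, not a parametric one.

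That is the trade-off. Your route is a short, tool-independent semantic proof. A symbolic size computation, showing that only the finitely many offsets generated by the shifts in the net matter, would give a coarse split independent of class sizes, which fits the paper's parametric \SNex\ setting better. Merging blocks afterwards, as you suggest, recovers coarseness only per instance.
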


\noindent Fig. \ref{fig:tequiv} illustrates the transformation: a transition with a size-dependent output is split so that each resulting arc function has a constant size.

\begin{figure}
    \centering
    \includegraphics[width=0.7\linewidth]{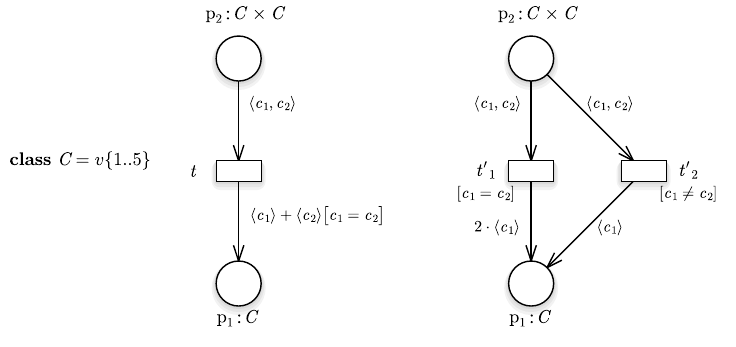}
    \caption{Transformation into constant-size functions}
    \label{fig:tequiv}
\end{figure}

Moreover, any $W\in\lang$ can be written as $\sum_i\lambda_iF_i$, where $F_i$ are pairwise disjoint guarded tuples producing multisets with unit multiplicities. This form also supports codomain checks by inspecting the coefficients $\lambda_i$. Although expressions in $\lang$ generally lack a unique canonical representative, the equivalence of $W,W'\in\lang$ is decidable by reducing both $W-W'$ and $W'-W$: the calculus always reduces null expressions to the syntactic form $\vect{0}$.


\section{Symbolic Flows: Basic Definitions And Properties}
\label{sec:semiflow-def}
We suppose that functions are constant-size. Let $\vect{H}$ be the $|P|\times |T|$ incidence matrix of an ESN, with $\vect{H}[p,t]=\Out{p,t}-\Inp{p,t}$. Matrix products use function composition.

\begin{definition}[P-(semi)flow]
Let $\vect{I}_P$ be a non-null vector $1\times |P|$ such that $\vect{I}_P[p]\in\fset{\cdom(p)}{\Gbag{\cdom'}}$ and $\vect{I}_P[p]\not\equiv0\Rightarrow\cdom'\leq\cdom(p)$. It is a P-flow iff $\vect{I}_P\cdot\vect{H}\equiv\vect{0}$. It is a P-semiflow when all non-null components map to $\Bag{\cdom'}$.
\label{def:P-semif}
\end{definition}

A minimal (semi)flow cannot be obtained as a linear combination of others. P-semiflows have the usual invariant interpretation: for every reachable marking $\vect{m}$, $\vect{I}_P\cdot\vect{m}=\vect{I}_P\cdot\vect{m}_0$. General flows may also be informative, but the examples in the following mostly use semiflows.

\begin{definition}[T-semiflow]
Let $\vect{I}_T$ be a non-null vector $|T|\times1$ such that $\vect{I}_T[t]\in\fset{\cdom'}{\Bag{\cdom(t)}}$ and $\vect{I}_T[t]\not\equiv0\Rightarrow\cdom'\leq\cdom(t)$. It is a T-semiflow iff $\vect{H}\cdot\vect{I}_T\equiv\vect{0}$.
\label{def:T-semif}
\end{definition}

\noindent A T-semiflow is a symbolic firing cycle: any sequence of transition instances conforming to an element of the image of $\vect{I}_T$ leads back to the starting marking.

These definitions differ in subtle ways from earlier formulations (e.g. \cite{Eva2007}), which construct common superdomains by inserting auxiliary “dummy" functions $All$, thus producing a model that is weakly bisimilar.
Using the largest subdomains makes it easier to interpret symbolic flows and relate them to the conventional flows of the unfolding.

\begin{definition}
\label{def:skel}
The skeleton of an ESN $\EuScript{N}=(P,T,I,O,\ldots)$ is the PT net $(P,T,F)$ with $F(p,t)=size(\Inp{p,t})$ and $F(t,p)=size(\Out{p,t})$.
\end{definition}

\begin{property}
\label{propr:skel}
If $\vect{I}_P$ is a symbolic P-flow for $\EuScript{N}$, then the integer vector $\vect{I}'_P[p]=size(\vect{I}_P[p])$ is a conventional P-flow for the skeleton $\EuScript{N}_{skel}$.
\end{property}

The analogous result holds for T-flows. Hence skeleton semiflows provide necessary candidates and substantially reduce the symbolic search space; Section \ref{sec:basis} uses this observation.

\begin{property}
\label{propr:eqT-P-flows}
Let $\vect{I}_P$ be a non-null $1\times |P|$ vector. Then
$\vect{I}_P \cdot \vect{H} \equiv \vect{0}
\Leftrightarrow  \vect{H}^t \cdot \vect{I}_P^t \equiv \vect{0}.$
\end{property}

Thus, P-flow verification can be carried out by ordinary symbolic row-by-column products or, equivalently, through the transposed matrix. This also relates the P-flows of a net to the T-flows of the dual net, obtained by swapping places and transitions and transposing arc functions.

\section{Semi-Flows Verification: A Few Examples}
\label{sec:semiflow-ver}

In this section, we validate candidate P and T-semiflows using several examples. Each of these examples outlines specific aspects of the ESN syntax. 
Two are drawn from the literature, while the others have been constructed for illustrative purposes. The calculations were carried out semi-automatically with the support of \SNex. In particular, the current implementation still does not natively support the composition in $\lang$. Nevertheless, one can simulate this composition via that in $\overline{\lang}$, provided the arc functions are (re)expressed as pairwise disjoint linear combinations $\sum_i \lambda_i F_i$ where terms $F_i$ are guarded tuples that map to multisets in which all multiplicities are equal to one (which is the natural choice of modelers). 
The individual calculations illustrated below took a few milliseconds on an 11th Gen Core i5 with 32 GB of RAM.
All models were edited using the GreatSPN package \cite{GreatSPN}, which also supports the unfolding of the SN and the computation of conventional (numerical) semiflows. A translator from the \GreatSPN\ format (\url{https://github.com/greatspn/SOURCES}) to the \SNex\ format can be downloaded from the  \SNex\ homepage.

\paragraph{Example 1}

\begin{figure}[!ht]
\[
\begin{blockarray}{ccccc}
 & load_{PL} & work_{PL,L} & assembly_{PL} & fault_{PL,L} \\
\begin{block}{c[cccc]}
\p{start} & -\tuple{All_L} &  & \tuple{All_L} & \bigstrut[t] \\
\p{line} & \tuple{pl, All_L} & -\tuple{pl, l} & &  \\
\p{ass} &  & \tuple{pl, l} & -\tuple{pl, All_L} & \\
\p{ok} & & & & -\tuple{pl, l}\\
\p{focc} & & & & -\tuple{pl}\bigstrut[b]\\
\end{block}
\end{blockarray}\vspace*{-1.25\baselineskip}
\]

\[
\begin{blockarray}{cccccc}
& \p{start} & \p{line} & \p{ass} & \p{ok} & \p{focc} \\
\begin{block}{c[ccccc]}
\vect{I}^1_P  & \tuple{l}_{L} & \tuple{l}_{PL,L} & \tuple{l}_{PL,L} &  &  \\
\end{block}
\begin{block}{c[ccccc]}
\vect{I}^{2}_P & &  & & -\tuple{pl}_{PL,L} & \tuple{pl}_{PL} \\
\end{block}
\end{blockarray}\vspace*{-1.3\baselineskip}
\]

\vspace*{0.35\baselineskip}
\[
\begin{blockarray}{ccccc}
& load_{PL} & work_{PL,L} & assembly_{PL} & fault_{PL,L} \\
\begin{block}{c[cccc]}
\vect{I}_T  & \tuple{pl}_{PL} & \tuple{pl, All_L}_{PL} & \tuple{pl}_{PL} &    \\
\end{block}
\end{blockarray}\vspace*{-1.3\baselineskip}
\]

\caption{Incidence Matrix and P- and T-semiflows of SN in Fig. \ref{fig:PL}}
\label{fig:PLflows}
\end{figure}

Fig. \ref{fig:PLflows} presents the  incidence matrix of the SN representing a distributed production line (Fig. \ref{fig:PL}) together with potential (semi)flows, according to property \ref{propr:skel} (with the T-vector also displayed as a row hereinafter); 
The SN skeleton is shown in the appendix (Fig. \ref{fig:skeleton}).

These candidate symbolic flows can be efficiently validated using the command-line interface (CLI) of \SNex. The color domains associated with the transitions, as well as the domains of the functions, are explicitly represented. It should be noted that the same symbol (e.g., $\tuple{l}$) can denote functions defined in different domains.

Semiflow $\vect{I}^1_P$ admits a straightforward interpretation. It indicates that the distributed PLs (the lower part of the figure) are conservative with respect to the color class $L$: for example, if the initial marking contains a multiple of $K$ distinct tokens (representing workpieces) in place $\p{start}$, this quantity is preserved throughout execution. 
$\vect{I}^2_P$ is the only instance of a flow that does not match a semiflow reported in the paper. This implies that the $PL$ component associated with the place $\p{ok}$ decreases consistently with the marking of the place $\p{focc}$.
Because the SN places are covered by (semi)flows, the SN is structurally bounded (and color-safe, provided that the initial marking is color-safe). 

Even more interestingly, the T-semiflow identifies a production \emph{cycle} within a given PL, represented by the symbol $pl$: this cycle consists of an occurrence of $load$, $K$ occurrences of $work$, and an occurrence of $assembly$.
Observe that these symbolic semiflows are intrinsically parametric, since no multiplicities appear either in $\vect{H}$ or in the semiflow expressions.

We now briefly analyze the relationship between symbolic and conventional (semi)flows, with reference to the unfolding of Fig. \ref{fig:PL} reported in the Appendix (see Fig. \ref{fig:unfolding}). 
Consider first $\vect{I}^1_P$. We note that the components associated with the places $\p{line}$ and $\p{ass}$ correspond to the projection onto class $L$ of the color domain $PL \times L$. Consequently, for a given color of $L$ bound to $l$, there exist as many conventional semiflows as there are possible combinations of $PL$-colors in these two places. This yields a total of $N^2 \cdot K$ conventional semiflows that instantiate $\vect{I}^1_P$. 
For example, when $N = 2$ and $K = 2$, we obtain the eight P-semiflows shown in the Appendix (Fig. \ref{fig:unfolding}).

The symbolic T-semiflow, in turn, corresponds to \(N\) conventional semiflows, each associated with a distinct binding of \(pl\) (see Fig.~\ref{fig:unfolding}).

\begin{figure}
    \centering
    \includegraphics[width=\linewidth]{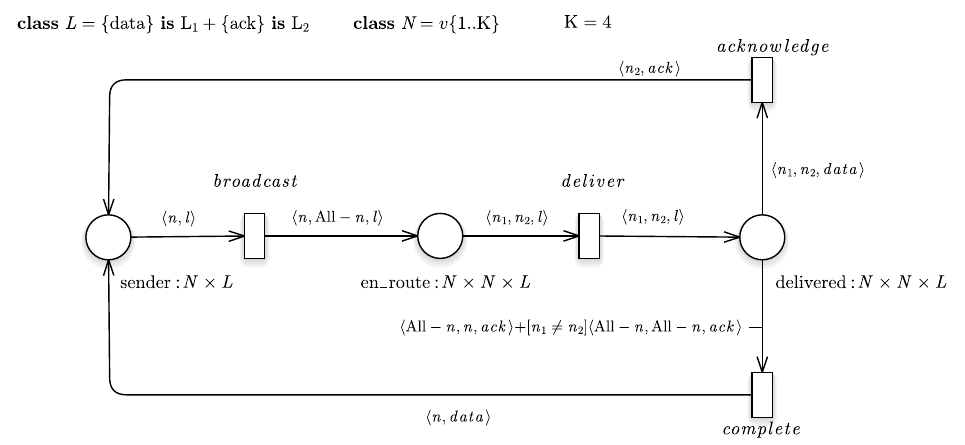}
    \caption{An ESN representing a broadcast protocol ($ack \equiv All_{L2}$, $data \equiv All_{L1}$)}
    \label{fig:broadcast}

\[
\begin{blockarray}{ccccc}
 & broadcast_{N,L} & deliver_{N,N,L} & acknowledge_{N,N} & complete_{N} \\
\begin{block}{c[cccc]}
\p{sender} & -\tuple{n,l} &  & \tuple{n_2,ack} & \tuple{n,data} \bigstrut[t] \\
\p{en\_route} & \tuple{n, All - n, l} & -\tuple{n_1, n_2, l} & &  \\
\p{delivered} &  & \tuple{n_1, n_2, l} & -\tuple{n_1, n_2, data} & -\tuple{All - n, n, ack} \\
 &  &  &  & - [n_1 \neq n_2]\tuple{All - n,All -  n, ack}\\
\end{block}
\end{blockarray}\vspace*{-0.75\baselineskip}
\]

\vspace*{0.25\baselineskip}
\[
\begin{blockarray}{ccccc}
& broadcast & deliver & acknowledge & complete \\
\begin{block}{c[cccc]}
  & (\tuple{n,data}+ &  (\tuple{n,All - n, data} + & \tuple{n,All - n}_{N} &  \tuple{n}_{N}  \\
\vect{I}_T & \tuple{All - n,ack})_{N} & \tuple{All - n, n, ack} + &  &    \\
 &  & [n_1 \neq n_2]\tuple{All - n,All -  n, ack})_{N} &  &    \\
\end{block}
\end{blockarray}\vspace*{-0.3\baselineskip}
\]

\caption{Incidence Matrix and T-semiflow of SN in Fig. \ref{fig:broadcast}}
\label{fig:T-flowbroad}
\end{figure}

\paragraph{Example 2}
The second example in this section is a variant of one presented in \cite{Camilli2021}, and it models a broadcast communication protocol. This model requires the use of the extended SN syntax. The ESN shown in Fig. \ref{fig:broadcast} (and the following) uses more sophisticated functions. The class $N$ denotes the nodes in a network, while the class $L$ is divided into two singleton subclasses that encode \emph{data} and \emph{acknowledge} messages, respectively. The protocol operates as follows: when a message $l$ is to be transmitted from a source node $n$ (the place $\p{sender}$), it is sent to all other nodes in the network (transition $broadcast$). Once a \emph{data} message has been delivered to a node, that node schedules an acknowledgment using the same broadcast mechanism. The protocol terminates successfully when all nodes, including the original sender, have received an acknowledgment.

A notable component of the SN is the function $\Inp{\p{delivered}, complete}$, previously described, which represents a sophisticated synchronization condition: all other nodes have given an acknowledgment to node $n$ (the original sender), and each of these nodes has received an acknowledgment from all nodes except $n$ and itself. 
 Within this function, a filter is applied to extract all pairs of distinct nodes from the parametric Cartesian product \((All - n) \times (All - n)\).
 
The ESN skeleton is also shown in the Appendix. Since it does not allow any P-semiflows, the ESN consequently has no symbolic P-semiflows either.
In contrast, it admits the following conventional T-semiflow (with $K = |N|$): 
\[
[broadcast: K,\; deliver: K \cdot (K - 1),\; acknowledge: K - 1,\; complete: 1].
\]
The candidate symbolic semiflow shown in Fig. \ref{fig:T-flowbroad} is in agreement with this. To confirm that it is indeed a T-semiflow, we must resolve nontrivial compositions like the one below, which arises when multiplying the second row of \vect{H} by $\vect{I}_T$. The symbolic calculus reduces all expressions $e$ to a normal form $\hat{e} \in \EuScript{L}$.

The equivalence of terms can be established purely syntactically since the calculus guarantees that, whenever $e \equiv 0$, it follows that $\hat{e} = 0$.

\vspace{-10pt}
\begin{align*}
\tuple{n, All - n, l}  \circ \tuple{All - n,ack} &\equiv \\
\tuple{n, All, l}  \circ \tuple{All - n,ack} - \tuple{n, n, l}  \circ \tuple{All - n,ack} &\equiv \\
\tuple{All - n, All, ack}  - [n_1 == n_2]\, \tuple{All - n, All - n, ack} &\equiv \\
[n_1 \neq n_2] \, \tuple{All - n, All - n, ack}  + \tuple{All - n, n, ack} & \\
\end{align*}

\vspace{-5pt}
We can directly verify that the row-by-column products are zero and therefore $\vect{I}_T$ is a semiflow. It corresponds to the completion of a basic protocol cycle: for any $n$ representing the original sender, and ignoring the specific order, it includes one $broadcast$ of a $data$ message from $n$, and $broadcast$s of an $ack$ message from all the other nodes; a set of $deliver$ events of the $data$ message from $n$ to every other node, a set of $deliver$ events of an $ack$ from each other node to $n$, and a set of $deliver$ events of an $ack$ from any node other than $n$ to any other node distinct from both $n$ and itself; the emission of an $acknowledge$ by every node other than $n$; and finally a $complete$ event, which represents the synchronization upon reception of all expected $ack$ messages by $n$ and by all other nodes.
We can similarly prove that the other row-by-column products are nullified.

\begin{figure}[t!]
    \centering
    \includegraphics[width=0.7\linewidth]{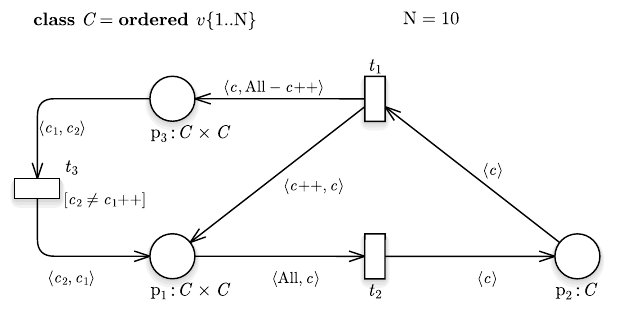}
    \caption{An SN with an ordered class (++ is !)}
    \label{fig:exe2}
\[
\begin{blockarray}{cccc}
 & t_{1_{C}} & t_{2_{C}} & t_{3_{C,C}}  \\
\begin{block}{c[ccc]}
\p{p}_1 & \tuple{!c, c} & -\tuple{All, c} &  \tuple{c_2, c_1}[c_2 \neq \, !c_1] \bigstrut[t] \\
\p{p}_2 & -\tuple{c} & \tuple{c} &   \\
\p{p}_3 & \tuple{c, All - \, !c} &  & -\tuple{c_1, c_2}[c_2 \neq \, !c_1]  \\
\end{block}
\end{blockarray}\vspace*{-1.25\baselineskip}
\]
\[
\begin{blockarray}{cccc}
& \p{p}_1 & \p{p}_2 & \p{p}_3  \\
\begin{block}{c[ccc]}
\vect{I}^1_P  & \tuple{c_2}_{C,C} \quad & \mathrm{N} \tuple{c}_{C} \quad & \tuple{c_1}_{C,C}   \\
\end{block}
\begin{block}{c[ccc]}
\vect{I}^2_P  & \tuple{c_1}_{C,C} &  \tuple{All} & \tuple{c_2}_{C,C} \\
\end{block}
\end{blockarray}\vspace*{-1.3\baselineskip}
\]

\[
\begin{blockarray}{cccc}
& t_1 & t_{2} & t_{3}  \\
\begin{block}{c[ccc]} \vect{I}_T  & \tuple{c}_{C} & \tuple{c}_{C} & \tuple{c, All - \, !c}_{C}  \\
\end{block}
\end{blockarray}\vspace*{-1.3\baselineskip}
\]

\caption{Incidence Matrix and semiflows of SN in Fig. \ref{fig:exe2}}
\label{fig:exe2flows}
\end{figure}

\paragraph{Example 3}
The SN shown in Fig. \ref{fig:exe2} uses an ordered color class $C$ and a color domain in which this class is replicated. The possible symbolic semiflows (Fig. \ref{fig:exe2flows}) correspond to the semiflows of the underlying skeleton, namely $[\p{p}_1:1 \,\, \p{p}_2: |C| \,\, \p{p}_3:1]$ and $[t_1:1 \,\, t_2: 1 \,\, t_3:|C|-1]$. Using the algorithm defined in \cite{pnse2021} verify, for example, that the product of $\vect{I}^1_P$ by the first column of $\vect{H}$ is null. We obtain the following.
\begin{eqnarray*}
   \tuple{c_2} \circ \tuple{!c, c} + N \tuple{c} \circ -\tuple{c} + \tuple{c_1} \circ \tuple{c, All -!c} &\equiv& \tuple{c} - N \tuple{c} + (N-1) \tuple{c} 
\end{eqnarray*}

$\vect{I}^1_P$ means that, in any reachable marking, the colors in the second component of the 2-tuples in $\p{p}_1$ plus those in the first component of the tuples in $\p{p}_3$ always equal $N = |C|$ times the multiset of colors in $\p{p}_2$. Dually, $\vect{I}^2_P$ means that the colors in the first component of the 2-tuples in $\p{p}_1$ and those in the second component of the tuples in $\p{p}_3$ always form the entire color set $C$, with each color occurring $|\vect{m}(\p{p}_2)|$ times.  
The T-semiflow, instead, states that any firing sequence consisting of an occurrence $\tuple{c}$ of $t_1$ and $t_2$, together with all occurrences $\tuple{c, c'}$ of $t_3$ with $c' \neq \, !c$, always returns to the starting marking.

\paragraph{Example 4}
In the small SN shown in Fig. \ref{fig:exe3}, we first encounter arc functions whose images are multisets with multiplicities greater than one. This raises no theoretical issues for our calculus; instead, it is forbidden in \cite{Eva2007}. All possible semiflows can be checked directly. For example, $\vect{I}^1_P$ states that for $\p{p}_1$, the sum of the colors in the two components of its 2-tuples always matches the multiset of colors in $\p{p}_2$.

\begin{figure}[!ht]
    \centering
    \includegraphics[width=0.7\linewidth]{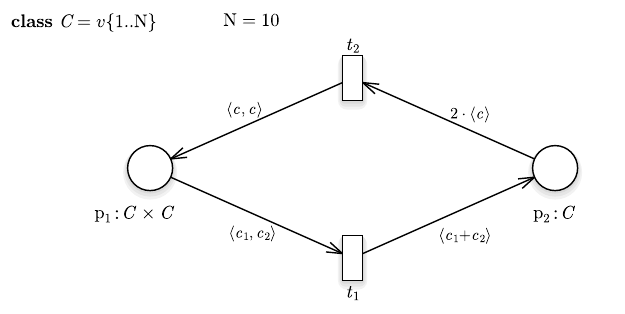}
    \caption{An SN with functions with non-one coefficients}
    \label{fig:exe3}
\[
\begin{blockarray}{ccc}
 & t_{1_{C,C}} & t_{2_{C}}   \\
\begin{block}{c[cc]}
\p{p}_1 & -\tuple{c_1,c_2}_{C,C} & \tuple{c,c}_{C}  \bigstrut[t] \\
\p{p}_2 & \tuple{c_1 + c_2}_{C,C} & -2\tuple{c}_{C}    \\
\end{block}
\end{blockarray}\vspace*{-1.25\baselineskip}
\]
\[
\begin{blockarray}{ccc}
& \p{p}_1 & \p{p}_2   \\
\begin{block}{c[cc]}
\vect{I}_P  & \tuple{c_1 + c_2}_{C,C} &  \tuple{c}_{C} \\
\end{block}
\end{blockarray}
\vspace*{-1.3\baselineskip}
\]
\vspace*{0.2\baselineskip}
\[
\begin{blockarray}{ccc}
& t_1 & t_{2}   \\
\begin{block}{c[cc]}
\vect{I}_T  & \tuple{c,c}_{C} & \tuple{c}_{C}  \\
\end{block}
\end{blockarray}\vspace*{-1.3\baselineskip}
\]

\caption{Incidence Matrix and semiflows of SN in Fig. \ref{fig:exe3}}
\label{fig:exe3flows}
\end{figure}

\section{Generating A Semi-Flow Basis}
\label{sec:basis}
The capability to formally verify symbolic semiflows is particularly valuable, as modelers typically possess a priori knowledge regarding the expected system behavior and seek to corroborate this knowledge through automated verification support. Nevertheless, the capability to construct a semiflow generative family represents an even more critical issue from both theoretical and practical standpoints. As mentioned previously, existing theoretical approaches have achieved only partial success as they rely on imposing rather stringent syntactic constraints on the admissible classes of arc functions.  

In this section, we concisely present a hybrid empirical–theoretical methodology that, instead of relying on advanced algebraic frameworks, exploits Property \ref{propr:skel} under the standing assumption that all arc functions have a constant size.

\begin{claim}
The set of arc functions in $\lang_{\cdom_{\ColClasses},\cdom_{\ColClasses}'}$ of a certain size $k$ is finite and can be computed explicitly.
\end{claim}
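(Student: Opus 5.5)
The plan is to reduce the claim to a finite enumeration over a finite set of ``atoms'' of $\lang_{\cdom_{\ColClasses},\cdom'_{\ColClasses}}$. I restrict to functions with natural values, i.e. mapping into $\Bag{\cdom'}$, which is the case relevant for semiflow components. With negative coefficients, sums such as $\tuple{c_1}-\tuple{c_2}+\tuple{c_1}$ show that finiteness fails for flows, so I read the claim up to $\equiv$ and for $\Bag{}$-valued functions. I also fix the class sizes and subclass sizes, or at least treat them symbolically through the size polynomials of \cite{pnse2021}.

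\emph{Step 1 (finite atom set).} By the normal form recalled in Section~\ref{sec:eq-tr}, every $W\in\lang$ is $\equiv$ to $\sum_i\lambda_iF_i$, where the $F_i$ are pairwise disjoint guarded tuples returning sets (unit multiplicities). For fixed source and target domains there are finitely many such tuples up to equivalence. Each component $f_j$ is an elementary class function of $C_h$, and there are only finitely many of these: a projection $x_h^j$ with $j\le e_h$, a shift $!^m$ with $m$ taken modulo $|C_h|$, a constant $All_h$ or $All_{h,q}$, or a difference of an $All$ constant and some projections. The standard predicates $g,g'$ are also finitely many, since they are Boolean combinations of finitely many (dis)equalities, successor relations and subclass memberships. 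Call $\mathcal A$ this finite set of guarded unit tuples.

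\emph{Step 2 (partition of the source domain and size bound).} The source domain is split into the finitely many cells determined by the complete \emph{configurations} of the atomic predicates, meaning which projections coincide, are successors, or fall in which subclass. On each cell every atom in $\mathcal A$ has a constant image size, computable with the chromatic-polynomial technique of \cite{pnse2021}. A $\Bag{}$-valued $W$ of constant size $k$ then restricts on each cell to $\sum_{A\in\mathcal A}\lambda_A A$ with $\lambda_A\in\Nat$ and $\sum_A\lambda_A\,|A|_{\text{cell}}=k$. Atoms of size $0$ on the cell contribute nothing, and every other atom has size at least $1$. Hence each $\lambda_A\le k$, and the cell admits finitely many solutions, namely the nonnegative integer solutions of one linear equation, which can be enumerated explicitly.

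\emph{Step 3 (assembly and deduplication).} A candidate function is obtained by choosing one solution per cell and gluing the choices with the cell predicates used as suffix guards. This gives finitely many candidates. Duplicates are removed with the decidable equivalence test of Section~\ref{sec:eq-tr}, i.e. by reducing $W-W'$ and $W'-W$ to $\vect{0}$. The result is an explicit finite list.

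The main obstacle is Step 2 when sizes are parametric. Atoms such as $\tuple{All-c}$ have size $|C|-1$, so whether they can appear in a size-$k$ function depends on the class sizes. My first attempt is to treat sizes as polynomials in $|C_h|$ and admit an atom only if its size polynomial is a constant $\le k$. Otherwise I would fix the class sizes of the model at hand, which is also how skeleton candidates from Property~\ref{propr:skel} are instantiated.
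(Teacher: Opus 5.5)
Your proposal is correct and takes essentially the same route as the paper's sketch. Finitely many elementary class functions and standard predicates give finitely many unit-multiplicity guarded tuples, and the normal form $\sum_i\lambda_iT_i$ with $\lambda_i\in\Nat$ reduces the problem to enumerating nonnegative combinations of total size $k$; you make this rigorous with the cell partition, the bound $\lambda_A\le k$, and deduplication via the decidable equivalence test. Your caveats (natural-valued arc functions only, and class sizes fixed or case-split so that shifts $!^m$ range modulo $|C_h|$) match the paper's implicit assumptions, as its own example shows: there the admissible size-one functions differ between $K>2$ and $K=2$.
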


The preceding claim is naturally justified by the fact that the collection of class-functions with domain $\cdom_{\ColClasses}$ is uniquely fixed, and the symbolic color-tuples in the domain $\cdom'_{\ColClasses}$ of cardinality $k$ can be generated by an algorithmic enumeration. In general, this enumeration problem, which is theoretically nontrivial, may become computationally expensive when the parameter $k$ is large and-or the color domains are composite. From this point of view, Property \ref{propr:eqT-P-flows} may help. 

We argue that the difficulty of this task can be significantly reduced by leveraging certain properties of SN arc functions. In particular, any arc function $F$ can be expressed in the form $\sum_i \lambda_i T_i$, with $\lambda_i \in \Nat$, where each $T_i$ is a tuple (optionally preceded by a filter) that produces multisets in which all element multiplicities are equal to one.

In this paper, we do not introduce any algorithm; instead, we demonstrate how to use the previous claim to construct a semiflow generative family, drawing on a simple example from Section \ref{sec:semiflow-ver}.
More challenging cases arise when considering the other examples in that section.

Consider the SN depicted in Fig.~\ref{fig:PL}, whose underlying skeleton is provided in the Appendix (Fig.~\ref{fig:skeleton}). The P-semiflow basis of this skeleton consists of the two vectors $[\p{start}:1 \quad \p{line}:1 \quad \p{ass}:1]$ and $[\p{ok}:1 \quad \p{focc}:1]$. In order to derive the corresponding symbolic P-semiflows, we must first select a common subdomain for the places belonging to the support of each semiflow. In this case, there is one candidate subdomain for each of the two P-semiflows, namely $L$ and $PL$, respectively. These classes are neither ordered nor partitioned; consequently, the corresponding admissible class-functions are $\{l,All_L\}$ and $\{pl,All_{PL}\}$. Depending on the place under consideration, these functions may have different domains (for instance, $l_L$ for the place $\p{start}$ and $l_{PL,L}$ for the place $work$); we let the context resolve this ambiguity. Hence, concentrating on the first P-semiflow and observing that the only cardinality-one arc functions that can be constructed are $\tuple{l}$ if $K > 2$ and either $\tuple{l}$ or $\tuple{All - l}$ if $K = 2$, we directly conclude that the symbolic P-semiflow $\vect{I}_p^1$ depicted in Fig.~\ref{fig:PLflows} constitutes part of the basis for $K > 2$. If $K = 2$, the P-vector $[\p{start}:\tuple{All - l}_{L} \quad \p{line}:\tuple{All - l}_{PL,L} \quad \p{ass}:\tuple{All - l}_{L}]$ also constitutes a semiflow, which is equivalent to the previously defined one. We do not elaborate further on this aspect here. The basis of P-semiflows is completed by $\vect{I}_p^2$ (Fig.~\ref{fig:PLflows}), for which analogous comments can be made.

The unique minimal T-semiflow of the SN skeleton is given by  
\[
[load:1 \;\; work:K \;\; assembly:1],
\]  
And therefore, the only admissible subdomain of transition domains is \(PL\).  
This subdomain will serve as the common domain for all functions that make up the symbolic T-semiflow.
With respect to the transition \(work\), we must determine arc functions of cardinality  $K := |L|$. The only feasible candidates for the corresponding semiflow component are $\tuple{pl, All_L}$ and, in the specific case where $N (:= |PL|) = K = 2$,
the additional candidate \(2 \langle All - pl, All_L \rangle\).

\section{Colored Flows}
\label{sec:cflow}

Symbolic flows capture notable properties, both qualitative (color types) and quantitative (color number). In general, however, they comply with rather restrictive, conservative patterns.

It is often advantageous to verify qualitative and quantitative invariants separately. In particular, the preservation of specific structural properties of colored place markings can be independent of the precise number of tokens present, for example, in the case of unbounded models. This is exactly the purpose of \emph{colored semiflows}, hereafter referred to simply as Csemiflows.

Let $\vect{H}^+$ and $\vect{H}^-$ be the $|P|\times|T|$ matrices defined entrywise by
\[
\vect{H}^+[p,t] = \overline{\Out{p,t} \ominus \Inp{p,t}}, 
\qquad
\vect{H}^-[p,t] = \overline{\Inp{p,t} \ominus \Out{p,t}}.
\]
Equivalently, for a given color instance of transition $t$, the entries $\vect{H}^+[p,t]$ and $\vect{H}^-[p,t]$ represent, respectively, the sets of color tuples (tokens) that are added and removed from place $p$.

\begin{definition}[P-Csemiflow] Let $\vect{I}_{P-c}$ be an $1 \times |P|$ vector of functions such that $\vect{I}_{P-c}[p] : \cdom(p) \rightarrow 2^{\cdom'}$ and $\vect{I}_{P-c}[p] \not\equiv 0 \Rightarrow \cdom' \leq \cdom(p)$.
$\vect{I}_{P-c}$ is a P-Csemiflow if and only if $\vect{I}_{P-c} \cdot \vect{H}^+  \,\equiv\, \vect{I}_{P-c} \cdot \vect{H}^-$.
\label{def:P-c-semif}
\end{definition}

\noindent T-Csemiflows are defined analogously. 

\paragraph{Example 5} Consider the SN depicted in Fig. \ref{fig:exe5}.  The underlying skeleton admits one P-semiflow: $[\p{p}_1:2 \quad \p{p}_2: 1 \quad \p{p}_3:3]$; It is straightforward to verify that there are no symbolic semiflows that match it (using property \ref{propr:skel} and considering the functions of proper size). Each transition is equipped with a guard and there is also a self-loop between $t_1$ and $\p{p}_2$. A transition guard is naturally propagated over the incident arc functions. To construct the matrices $\vect{H}^+$ and $\vect{H}^-$, we must rewrite the difference $\vect{H}[\p{p}_2,t_1] := \Out{\p{p}_2,t_1} - \Inp{\p{p}_2,t_1}$ as a pairwise disjoint algebraic sum, which can always be achieved in our framework. In this simple situation, we obtain (omitting the tuple notation; $g_1 \equiv \Guards(t_1)$):
$$(-2 c_1 + c_2)[g_1] \equiv  c_2[c_1 \neq c_2][g_1] -(c_1[c_1 == c_2] +2 c_1[c_1 \neq c_2])[g_1] $$

When considering the support of positive and negative terms, we remove multiplicities and obtain $\vect{H}^+[\p{p}_2,t_1] = c_2[c_1 \neq c_2 \wedge g_1]$, $\vect{H}^-[\p{p}_2,t_1] = c_1[g_1]$. (Hereinafter, in this subsection, we omit the multiset support notation for simplicity.). The other elements of $\vect{H}^+$ and $\vect{H}^-$ are simply derived.

At this stage, we can verify (by applying the calculus in $\overline{\lang}$) whether the two vectors $\vect{I}_P^i$ depicted in Fig.~\ref{fig:exeCflow} actually define colored semiflows. The first would, in that case, correspond to the conservation of the colors in place $\p{p}_2$ and the first component of the 2-tuples in place $\p{p}_1$, respectively, with respect to the subclass $C_1$.  
However, this verification fails. 
Consider, for example, the product of $\vect{I}_P^1$ with the first column of $\vect{H}^+$ and, analogously, with the first column of $\vect{H}^-$:
\begin{eqnarray*}
\tuple{c}[c \in C_1] \circ \tuple{c_2}[g_1 \wedge c_1 \neq c_2]  & \equiv& \tuple{c_2}[g_1 \wedge c_1 \neq c_2] \\
\tuple{c_2}[c_2 \in C_1] \circ \tuple{c_1,c_2}[g_1] + \tuple{c}[c \in C_1] \circ \tuple{c_1}[g_1 ] &\equiv& \tuple{c_2}[g_1 ] + \tuple{c_1}[g_1 ]\\
\end{eqnarray*}
\noindent the resulting expressions, of course, are not equivalent.
In contrast, we easily verify that $\vect{I}_P^2$ is a Csemiflow: the set of colors in place $\p{p}_3$ and the first component of the 2-tuples in $\p{p}_1$ remain invariant.

An interesting concluding observation is that the calculus in $\overline{\lang}$ implemented in \SNex\ is inherently parametric with respect to the cardinalities of color classes, which can be specified by linear constraints; for example, $|C| \geq 3$. Consequently, we are able to validate Csemiflows not only in a purely symbolic manner but also in a parametric setting. For example, the aforementioned results hold irrespective of the cardinality of the subclass \(C_1\).  
A detailed investigation of this parametric analysis is beyond the scope of the present work.

\begin{figure}
    \centering
    \includegraphics[width=0.75\linewidth]{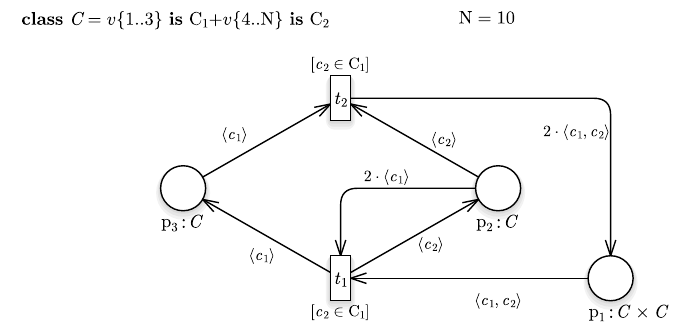}
    \caption{SN with a split color class and guards}
    \label{fig:exe5}
\end{figure}

\begin{figure}[!ht]
\centering
\begin{minipage}{0.48\linewidth}
\centering
\[
\begin{blockarray}{ccc}
\vect{H}^+ & t_{1_{\,C,C}} & t_{2_{\,C,C}}   \\
\begin{block}{c[cc]}
\p{p}_1 &  & \tuple{c_1,c_2}[g_2]  \bigstrut[t] \\
\p{p}_2 & \tuple{c_2}[g_1 \wedge c_1 \neq c_2] &     \\
\p{p}_3 &  \tuple{c_1}[g_1] &     \\
\end{block}
\end{blockarray}
\]
\end{minipage}\hfill
\begin{minipage}{0.48\linewidth}
\centering
\[
\begin{blockarray}{ccc}
\vect{H}^- & t_{\,1_{C,C}} & t_{\,2_{C,C}}   \\
\begin{block}{c[cc]}
\p{p}_1 & \tuple{c_1,c_2}[g_1] & \bigstrut[t] \\
\p{p}_2 & \tuple{c_1}[g_1] & \tuple{c_2}[g_2]    \\
\p{p}_3 &  & \tuple{c_1}[g_2]    \\
\end{block}
\end{blockarray}
\]
\end{minipage}

\[
\begin{blockarray}{cccc}
& \p{p}_1 & \p{p}_2  & \p{p}_3 \\
\begin{block}{c[ccc]}
\vect{I}_P^1  & \tuple{c_2}[c_2 \in C_1]_{C,C} \, \, \, &  \tuple{c}[c \in C_1]_{C} \, \, \, & \\
\end{block}
\begin{block}{c[ccc]}
\vect{I}_P^2  & \tuple{c_1}_{C,C} \, \, \, &  \, \, \, & \tuple{c}_{C}\\
\end{block}
\end{blockarray}
\]

\caption{$\vect{H}^+$ and $\vect{H}^-$ matrices and P-Csemiflow of SN in Fig. \ref{fig:exe5}}
\label{fig:exeCflow}
\end{figure}

\section{Verifying General Invariants}
\label{sec:genInv}
Using a simple yet nontrivial example, we then briefly illustrate how the SN structural calculus implemented in \SNex\ can be employed to rigorously verify properties that go beyond symbolic or colored semiflows. With this method, we broaden the class of models that can be examined (including, for instance, unbounded models) and account for inhibitor arcs, which substantially enhance the expressive power of the SN modeling framework.

The approach is based on a well-established principle. Our goal is to establish the property $\EuScript{P}$, which is formally defined in an extended language $\lang^+$ of $\lang$ (or $\overline{\lang}^+$ if we disregard the quantitative aspects).

\begin{enumerate}
    \item We first show that $\EuScript{P}$ is satisfied in the initial marking (which can also be described symbolically).
    \item We then show that if $\EuScript{P}$ is met in a generic marking $\vect{m}$, then it is also met in any $\vect{m}'$ such that $\vect{m} [(t, c) > \vect{m}'$, for all $t \in T, c \in \cdom(t)$.
\end{enumerate}

Our focus is on the second point, which is the most significant. The essential requirement is that this proof be carried out at a symbolic level, that is, by operating on symbolic transition instances that compactly encode the corresponding concrete instances. To this end, and in analogy with SMT-based techniques, we employ the \SNex\ rewriting engine to infer new axioms (rewriting rules), which are then dynamically incorporated into the underlying theory. Currently, this inference process is performed manually. We plan to extend the \SNex\ CLI to make the overall workflow semi-automated.

From a technical point of view, we must enrich $\lang$ ($\overline{\lang}$) with symbols that denote parametric (multi)sets over specified color domains. We use symbols like $Z_{\cdom_\ColClasses}$ to denote a \emph{parametric} (multi-)set in $\cdom_\ColClasses$, where $\cdom_\ColClasses$ denotes any subdomain made up of color classes in $\ColClasses$.
This symbol can be interpreted as a function with a null (or neutral) domain; that is, $ \bullet \rightarrow Bag[\cdom_\ColClasses]$ or $ \bullet \rightarrow 2^{\cdom_\ColClasses}$, depending on whether we are considering $\EuScript{L}^+$ or $\overline{\EuScript{L}}^+$.

The functional operators of the calculus are then applied in a uniform manner. For example, we may form expressions such as $\tuple{Z_{\cdom_\ColClasses}^1, Z_{\cdom_\ColClasses}^2}$, $Z_{\cdom_\ColClasses}^1 \ \mu\ Z_{\cdom_\ColClasses}^2$, or $F \circ Z_{\cdom_\ColClasses}^1$, where $\mu \in \{+(-),\cap, \ominus \}$, $F \in \fset{\cdom_\ColClasses}{D}$.
    
\begin{figure}[t]
    \centering
    \includegraphics[width=0.7\linewidth]{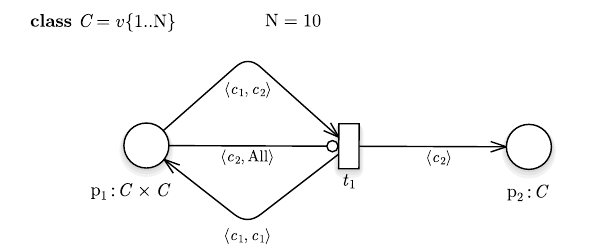}
    \caption{An SN transition with inhibitor edge}
    \label{fig:exe6}
\end{figure}

\paragraph{Example 6}
Let us clarify the concept using the SN in Fig. \ref{fig:exe6}. Our goal is to show that the first component of the 2-tuples (tokens) in place $\p{p}_1$ and the colors in place $\p{p}_2$ are disjoint, under the assumption that this holds in the initial marking $\vect{m}_0$. Demonstrating this directly by inspecting the color functions is not straightforward at all. We can represent an arbitrary SN marking $\vect{m}$ as: $[\vect{m}(\p{p}_1): Z_{C,C}^1 \quad \vect{m}(\p{p}_2): Z_{C}^1]$. The invariant property is (we suppose that '$\circ$' takes priority over'$\cap$' and the latter over '+'):
$${c_1} \circ Z_{C,C}^1 \cap Z_{C}^1 \equiv 0$$

The first natural step consists of incorporating the corresponding rewriting rule ${c_1} \circ Z_{C,C}^1 \cap Z_{C}^1 \longrightarrow 0$ into the framework. Subsequently, we syntactically encode within $\vect{m}$ the enabling condition associated with transition $t_1$. This procedure can, in principle, be fully automated.
A generic symbolic instance of $t_1$ is given by $(t_1,\tuple{Z_{C}^2, Z_{C}^3})$, where $Z_{C}^2$ and $Z_{C}^3$ denote parametric colors in $C$ that are bound to variables (projections) $c_1$ and $c_2$, respectively.  

The constraints imposed by the input and inhibitor arc functions can be represented in the form of rewrite rules. With respect to the input constraint, we obtain:
\[
Z_{C,C}^1 \longrightarrow Z_{C,C}^2 + \tuple{Z_{C}^2, Z_{C}^3}, \quad |Z_{C}^2| = 1, |Z_{C}^3| = 1.
\]

 By substituting this encoding into the invariant expression, which is supposed to hold in $\vect{m}$, and then applying the composition rules together with standard multiset properties—such as the distributivity of composition over multiset sum and the following:
$$
(A + B) \cap C \equiv 0 \Rightarrow A \cap C \equiv 0 \wedge B \cap C \equiv 0,
$$
we finally obtain this expression, which can be simplified:
\begin{eqnarray*}
{c_1} \circ (Z_{C,C}^2 + \tuple{Z_{C}^2, Z_{C}^3}) \cap Z_{C}^1 & \equiv  {c_1} \circ Z_{C,C}^2 \cap Z_{C}^1 + Z_{C}^2 \cap Z_{C}^1  \equiv \emptyset \\
\end{eqnarray*}
\noindent From which we infer the rules: i) ${c_1} \circ Z_{C,C}^2 \cap Z_{C}^1\longrightarrow 0$, ii) $Z_{C}^2 \cap Z_{C}^1  \longrightarrow 0$.

The constraint imposed by the inhibitor arc function is formalized as follows:
\begin{eqnarray*}
\tuple{Z_{C}^3, All} \cap (Z_{C,C}^2 + \tuple{Z_{C}^2, Z_{C}^3}) < \tuple{All, All} & \cong \\
\tuple{Z_{C}^3, All} \cap (Z_{C,C}^2 + \tuple{Z_{C}^2, Z_{C}^3})  \equiv 0 & \cong \\
\tuple{Z_{C}^3, All} \cap Z_{C,C}^2 \equiv 0 \, \wedge \, \tuple{Z_{C}^3, All} \cap \tuple{Z_{C}^2, Z_{C}^3} \equiv 0 &
\end{eqnarray*}
\noindent From which we infer the rules: iii) $Z_{C}^3 \cap c_1 \circ Z_{C,C}^2 \longrightarrow 0$,  iv) $Z_{C}^3 \cap Z_{C}^2 \longrightarrow 0$.

\vspace{5pt}
The marking expression we obtain by symbolically firing the instance $(t_1,\tuple{Z_{C}^2, Z_{C}^3})$ in \vect{m} is: 

\vspace{5pt}
\hspace{2.5cm}$[\vect{m}'(\p{p}_1): Z_{C,C}^2 + \tuple{Z_{C}^2,Z_{C}^2} \quad \vect{m}'(\p{p}_2): Z_{C}^1 + Z_{C}^3]$.

\vspace{5pt}
\noindent Finally, it remains to establish the following result:
\begin{align*}
  {c_1} \circ (Z_{C,C}^2 + \tuple{Z_{C}^2,Z_{C}^2}) \cap (Z_{C}^1 + Z_{C}^3)  \equiv 
  ({c_1} \circ Z_{C,C}^2 + Z_{C}^2) \cap (Z_{C}^1 + Z_{C}^3)  \equiv 0 & \cong  \\
  {c_1} \circ (Z_{C,C}^2) \cap Z_{C}^1  \equiv 0 \wedge {c_1} \circ (Z_{C,C}^2) \cap Z_{C}^3  \equiv 0 \wedge
  Z_{C}^2 \cap  Z_{C}^1  \equiv 0 \wedge
  Z_{C}^2 \cap  Z_{C}^3  \equiv 0 &   
\end{align*}

\noindent which can be directly inferred from axioms i)-iv).

As a final remark, we note that this methodology can also be employed to validate candidate symbolic P-semiflows (or colored flows). However, in this case, the validation is achieved through a more involved and less streamlined procedure than the approaches presented in Sections \ref{sec:semiflow-ver} and \ref{sec:cflow}.

\section{Conclusions} We have shown that the symbolic structural calculus for Symmetric Nets (SN)—developed over the last twenty years to infer structural dependencies among SN nodes and realized in the \SNex\ tool—can likewise be used to validate a wide range of symbolic (semi)flows for extended Symmetric Net (ESN) models. In addition, we have carried out an initial study on deriving a semiflow basis by leveraging a natural link to the standard semiflows of the ESN skeleton. Lastly, we have indicated how this calculus could be applied to formally check more general categories of structural invariants.


In practice, we intend to enhance \SNex\ by (1) providing full support for composing multiset functions, (2) introducing an interactive inference facility that supplies newly inferred axioms to the \SNex\ rewriting engine, and (3) ensuring seamless interoperability with other tools, in particular \GreatSPN.  
On the theoretical side, we investigate whether generalized inverses of arc functions (linearly extended) can be represented symbolically, so as to build a generative family of symbolic flows, possibly leveraging the preliminary findings reported in this paper.


  

\bibliographystyle{eptcs}
\bibliography{biblio}

\newpage
\newpage

\appendix

\section*{Appendix: proofs, unfolding-skeletons, and conventional semiflows}

\proof{Property \ref{propr:skel}. Consider the $j^{th}$ column of $\vect{H}$. The row-by-column product $\sum_{p \in P} \vect{I}_P[p] \circ \vect{H}[p,t_j]$ results in the null function (Hp). Since the function linear extension is also constant-size, $\vect{I}_P[p] \circ \vect{H}[p,t_j]$ is constant-size and $size(\vect{I}_P[p] \circ \vect{H}[p,t_j]) = size(\vect{I}_P[p]) \cdot (size(\Out{p,t_j}) - size(\Inp{p,t_j})$.\\ Therefore, the algebraic sum corresponding to the conventional row-by-column product is zero; thus, $\vect{I}'_P$ is a conventional flow of $\EuScript{N}_{skel}$.

\noindent Property \ref{propr:eqT-P-flows}. It follows directly from the definition of the transpose of a matrix, together with the fundamental rule connecting the transpose of a function with composition:
$g \circ f \equiv f^t \circ {g}^t$    
}


   
\begin{figure}[h]
\begin{center}
\includegraphics[width=0.8\textwidth]{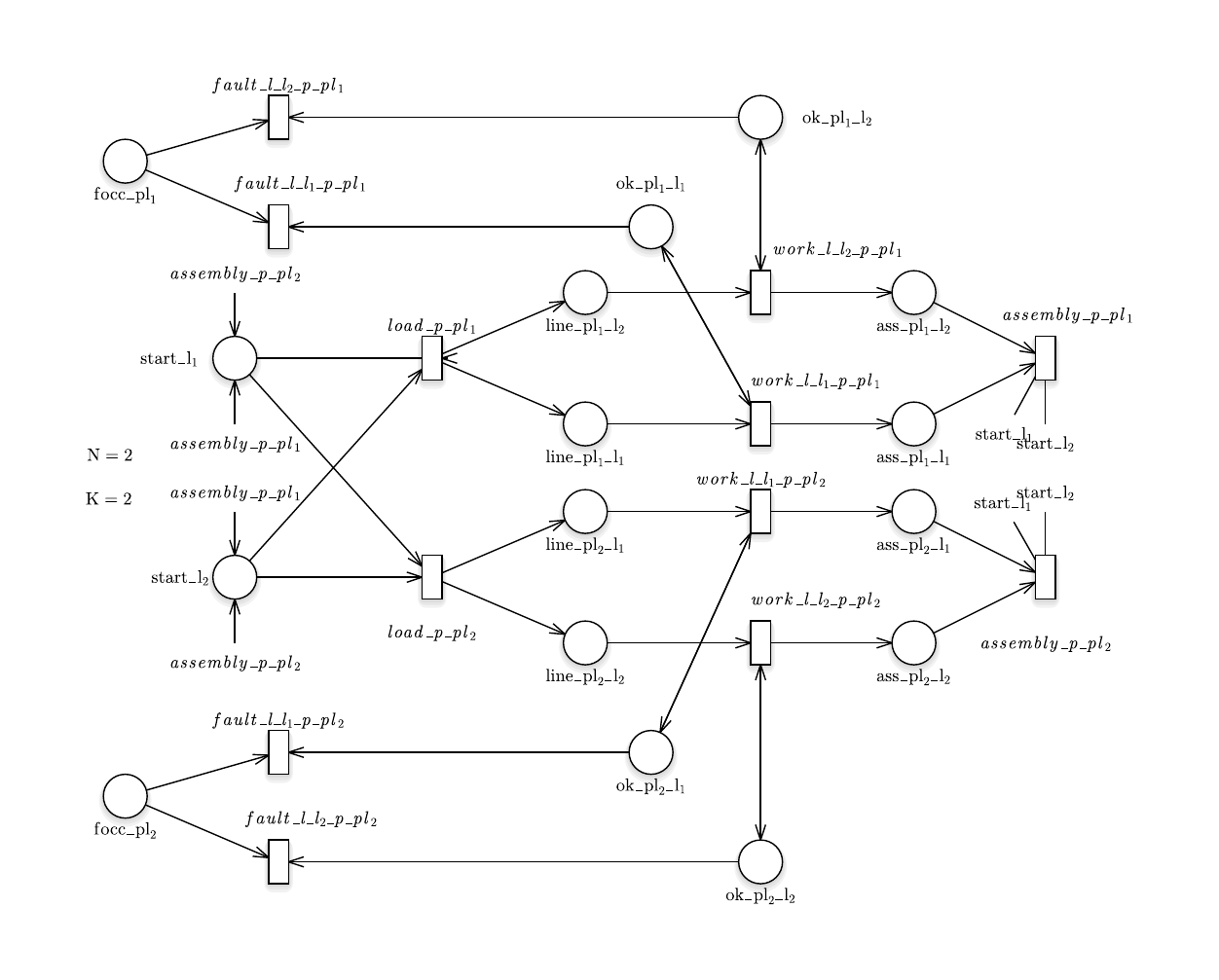}

\caption{Unfolding of the PL model ($N = 2$, $K = 2$)}
\label{fig:unfolding}
\end{center}
\end{figure}

\begin{center}
{\footnotesize
\begin{tabular}{lllll}
\hline
 \p{start\_l_1} & \p{line\_pl_1\_l_2} & \p{line\_pl_2\_l_2} & \p{ass\_pl_1\_l_2} & \p{ass\_pl_2\_l_2} \\
 \p{start\_l_1} & \p{line\_pl_1\_l_1} & \p{line\_pl_2\_l_2} & \p{ass\_pl_1\_l_1} & \p{ass\_pl_2\_l_2} \\
 \p{start\_l_1} & \p{line\_pl_1\_l_2} & \p{line\_pl_2\_l_1} & \p{ass\_pl_1\_l_2} & \p{ass\_pl_2\_l_1} \\
 \p{start\_l_1} & \p{line\_pl_1\_l_1} & \p{line\_pl_2\_l_1} & \p{ass\_pl_1\_l_1} & \p{ass\_pl_2\_l_1} \\
 \p{start\_l_2} & \p{line\_pl_1\_l_2} & \p{line\_pl_2\_l_2} & \p{ass\_pl_1\_l_2} & \p{ass\_pl_2\_l_2} \\
 \p{start\_l_2} & \p{line\_pl_1\_l_1} & \p{line\_pl_2\_l_2} & \p{ass\_pl_1\_l_1} & \p{ass\_pl_2\_l_2} \\
 \p{start\_l_2} & \p{line\_pl_1\_l_2} & \p{line\_pl_2\_l_1} & \p{ass\_pl_1\_l_2} & \p{ass\_pl_2\_l_1} \\
 \p{start\_l_2} & \p{line\_pl_1\_l_1} & \p{line\_pl_2\_l_1} & \p{ass\_pl_1\_l_1} & \p{ass\_pl_2\_l_1}\\
\end{tabular}
\vspace{1em}

\begin{tabular}{llll}
\hline
$\mathit{load\_p\_pl_2}$ & $\mathit{work\_l\_l_1\_p\_pl_2}$ & $\mathit{work\_l\_l_2\_p\_pl_2}$ & $\mathit{assembly\_p\_pl_2}$ \\
$\mathit{load\_p\_pl_1}$ & $\mathit{work\_l\_l_1\_p\_pl_1}$ & $\mathit{work\_l\_l_2\_p\_pl_1}$ & $\mathit{assembly\_p\_pl_1}$ \\
\hline
\end{tabular}

\captionof{table}{P- and T-semiflows of the Petri net shown in Fig. \ref{fig:unfolding}.}
}
\end{center}

\begin{figure}[h]
\begin{center}
\includegraphics[width=0.7\textwidth]{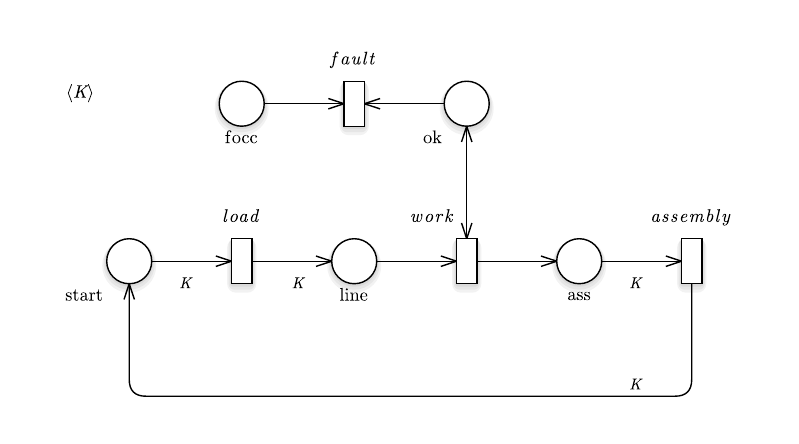}
\caption{Skeleton of the PL model}
\label{fig:skeleton}

\includegraphics[width=0.9\textwidth]{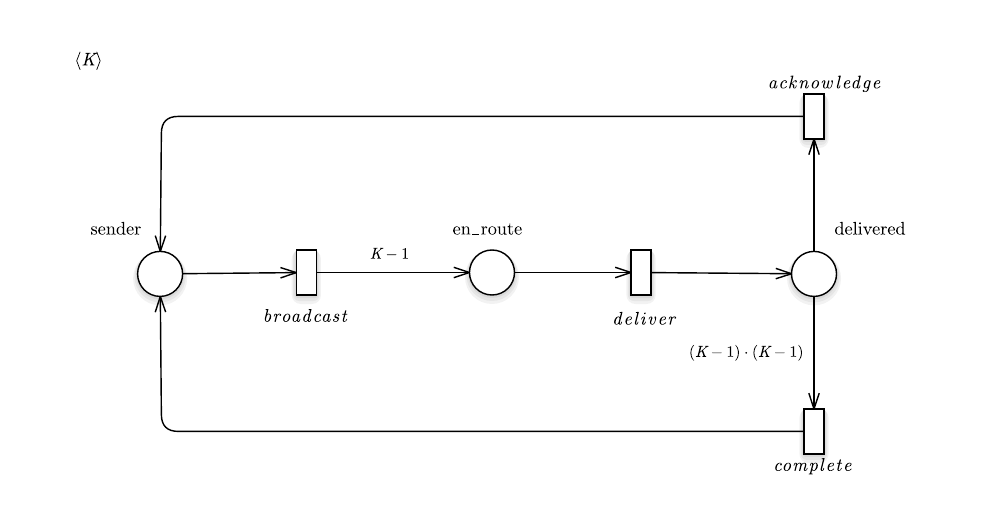}
\caption{Skeleton of the broadcast model}
\label{fig:skeletonbroad}
\end{center}
\end{figure}

\end{document}